\documentclass[12pt,article]{memoir}
\usepackage[english.US]{babel}
\usepackage{amsthm}
\usepackage{amssymb}
\usepackage[scr]{rsfso}
\usepackage{upgreek}
\usepackage[leqno]{mathtools}
\usepackage{enumitem}

\usepackage{xr-hyper}
\usepackage{hyperref}

\usepackage{url}

\setsecheadstyle{\large\bfseries\memRTLraggedright}
\setsubsecheadstyle{\bfseries\memRTLraggedright}

\usepackage{todonotes}

\numberwithin{equation}{chapter}

\theoremstyle{plain}
\newtheorem{theorem}{Theorem}[chapter]
\newtheorem{proposition}[theorem]{Proposition}

\theoremstyle{definition}

\newtheorem{remark}[theorem]{Remark}

\theoremstyle{plain}

\theoremstyle{definition}
\newtoks{\thehRemark}
\newtheorem*{Remark}{\the\thehRemark}

\newenvironment{claim}[1][{\textup{(\theequation)}}]{\refstepcounter{equation}\vglue10pt
\begin{trivlist}
\item[{\hskip\labelsep#1}]}{\vglue10pt\end{trivlist}}

\newcommand{\Tr}{\operatorname{Tr}}

\newcommand{\supp}{\operatorname{supp}}

\newcommand{\blangle}{{\boldsymbol{\langle}}}
\newcommand{\brangle}{{\boldsymbol{\rangle}}}

\newcommand{\TF}{\mathsf{TF}}

\newcommand{\D}{\mathsf{D}}

\newcommand{\bR}{\mathbb{R}}
\newcommand{\bC}{\mathbb{C}}
\newcommand{\bS}{\mathbb{S}}

\newcommand{\x}{\mathsf{x}}
\newcommand{\y}{\mathsf{y}}

\newcommand{\sfH}{\mathsf{H}}

\newcommand{\sC}{\mathscr{C}}
\newcommand{\cH}{\mathcal{H}}
\newcommand{\sL}{\mathscr{L}}
\newcommand{\sH}{\mathscr{H}}

\newcommand{\fH}{\mathfrak{H}}

\newcommand{\cJ}{\mathcal{J}}

\newcommand{\cX}{\mathcal{X}}

\newcommand{\cZ}{\mathcal{Z}}

\externaldocument[monsterbook-]{monsterbook}[http://www.math.toronto.edu/ivrii/monsterbook.pdf]
\externaldocument[strongscott-]{Strong-Scott}[https://arxiv.org/abs/1908.05478]
\externaldocument[elden2-]{Electronic-Density-2}[https://arxiv.org/abs/1911.03510]

\begin{document}

\title{Upper Estimates for Electronic Density in Heavy Atoms and Molecules\thanks{\emph{2010 Mathematics Subject Classification}: 35P20, 81V70 .}\thanks{\emph{Key words and phrases}: electronic density, Thomas-Fermi approximation.}
}

\author{Victor Ivrii\thanks{This research was supported in part by National Science and Engineering  Research Council (Canada) Discovery Grant  RGPIN 13827}}

\maketitle

\begin{abstract}
We derive an upper estimate for electronic density $\rho_\Psi (x)$ in heavy atoms and molecules. While not sharp, on the distances $\gtrsim Z^{-1}$ from the nuclei it is still better than the known estimate $CZ^3$ ($Z$ is the total charge of the nuclei, $Z\asymp N$ the total number of electrons).
\end{abstract}

\chapter{Introduction}
\label{sect-1}
This paper is a result of my rethinking of three rather old but still remarkable papers \cite{HHT, S1, IaLS}, which I discovered recently.
The first of them derives the estimate electronic density $\rho_\Psi (x)$ from above via some integral also containing $\rho_\Psi $, the second  one provides an estimate $\rho_\Psi (x)=O(Z^3)$ where $Z$ is the total charge of nuclei and the third one derives the asymptotic of the averaged electronic density on the distances $O(Z^{-1})$ from the nuclei but its method works also on the larger distances.

The purpose of this paper is to provide a better upper estimate for $\rho_\Psi(x)$ on the distances larger than $Z^{-1}$ from the nuclei.

Let us consider the following operator (quantum Hamiltonian)
\begin{gather}
\mathsf{H}=\mathsf{H}_N\coloneqq   \sum_{1\le j\le N} H _{V,x_j}+\sum_{1\le j<k\le N}|x_j-x_k| ^{-1}
\label{eqn-1-1}\\
\shortintertext{on}
\fH= \bigwedge_{1\le n\le N} \sH, \qquad \sH=\sL^2 (\bR^3, \bC^q)
\label{eqn-1-2}\\
\shortintertext{with}
H_V =-\Delta -V(x)
\label{eqn-1-3}
\end{gather}
describing $N$ same type particles in (electrons) the external field with the scalar potential $-V$ (it is more convenient but contradicts notations of the previous chapters), and repulsing one another according to the Coulomb law.

Here $x_j\in \bR ^3$ and $(x_1,\ldots ,x_N)\in\bR ^{3N}$, potential $V(x)$ is assumed to be real-valued. Except when specifically mentioned we assume that
\begin{equation}
V(x)=\sum_{1\le m\le M} \frac{Z_m }{|x-\y_m|}
\label{eqn-1-4}
\end{equation}
where $Z_m>0$ and $\y_m$ are charges and locations of nuclei.

Mass is equal to $\frac{1}{2}$ and the Plank constant and a charge are equal to $1$ here. We assume that $N\asymp Z=Z_1+\ldots+Z_M$.

Our purpose is to a pointwise upper estimate for  the \emph{electronic density}
\begin{gather}
\rho_\Psi (x) =N\int |\Psi (x,x_2,\ldots,x_N)|^2\,dx_2\cdots dx_N.
\label{eqn-1-5}\\
\shortintertext{Let}
\ell(x)=\min_{1\le m\le M}|x-\y_m|
\label{eqn-1.6}
\end{gather}
is the distance to the nearest nucleus.  Our goal is  to prove  the following theorem:

\begin{theorem}\label{thm-1.1}
Let 
\begin{equation}
\min{1\le m<m'\le M}|\y_m-\y_{m'}|\ge Z^{-1/3+\sigma}
\label{eqn-1.7}
\end{equation}
with $\sigma> 0$. Then
\begin{enumerate}[label=(\roman*), wide, labelindent=0pt]
\item\label{thm-1.1-i}
For $\ell(x)\le Z^{-1/3}$ the following estimate holds:
\begin{equation}
\rho_\Psi (x)\le C\left\{\begin{aligned}
&Z^3  				&&\text{for\ \ } \ell (x)\le Z^{-8/9},\\
&Z^{19/9}\ell^{-1}         &&\text{for\ \ } Z^{-8/9}\le \ell (x)\le Z^{-7/9},\\
&Z^{197/90} \ell^{-9/10} 	&&\text{for\ \ } Z^{-7/9}\le \ell\le Z^{-1/3}.
\end{aligned}\right.
\label{eqn-1.8}
\end{equation}

\item\label{thm-1.1-ii}
For $\ell(x)\ge Z^{-1/3}$ the following estimate holds:
\begin{equation}
\rho_\Psi (x)\le C\left\{\begin{aligned}
&Z^{17/9}\ell ^{-9/5} 				&&\text{for\ \ } Z^{-1/3}\le \ell\le Z^{-5/18},\\
&Z^{19/9}\ell ^{-1} 	        \qquad &&\text{for\ \ } \ell \ge Z^{-5/18}.
\end{aligned}\right.
\label{eqn-1.9}
\end{equation}

\item\label{thm-1.1-iii}
Furthermore, if  $(Z-N)\ge C_0Z^{5/6}$ then
\begin{equation}
\rho_\Psi (x) \le Z^{19/9}\ell ^{-1} \qquad \text{for\ \ } \ell(x)\ge C_0(Z-N)_+^{-1/3}.
\label{eqn-1.10}
\end{equation} 
\end{enumerate}
\end{theorem}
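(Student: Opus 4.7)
The strategy is to combine the Hoffmann-Ostenhof--Hoffmann-Ostenhof--Thirring pointwise estimate from \cite{HHT}, which bounds $\rho_\Psi(x)$ via an integral in which $\rho_\Psi$ and $V$ themselves appear, with sharp integrated shell-bounds on $\rho_\Psi$ that follow from the Thomas--Fermi and Scott analysis of \cite{IaLS} and the trivial bound of \cite{S1}. From $\mathsf{H}_N\Psi=E\Psi$ one derives (morally) a subsolution inequality for $u=\sqrt{\rho_\Psi}$ of the form
\[
(-\Delta+\mu)\,u\le V\,u+F,
\]
where $\mu>0$ is controlled by the ionization threshold $|E|/N$ and $F$ encodes the two-body correlation residuals, controlled by Hardy-type inequalities and the positivity of the repulsion. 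Inverting with the Yukawa Green kernel $G_\mu(x,y)$ yields the representation
\[
\sqrt{\rho_\Psi}(x)\le \int G_\mu(x,y)\bigl[V(y)\sqrt{\rho_\Psi}(y)+F(y)\bigr]\,dy,
\]
so the problem reduces to bounding weighted integrals of $\rho_\Psi$ against $V$.

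\textbf{Reduction to shell estimates.} I would decompose the $y$-integration into dyadic shells about each nucleus $\y_m$, plus a far region. On each shell of radius $r$ around $\y_m$, apply Cauchy--Schwarz and the integrated bound $\int_{|y-\y_m|\le r}\rho_\Psi\le C\phi(r,Z)$, where $\phi$ captures the Scott-corrected averaged density on scales $r\lesssim Z^{-1}$ and the Thomas--Fermi density on scales $Z^{-1}\lesssim r\lesssim Z^{-1/3}$. The separation hypothesis \eqref{eqn-1.7} guarantees that contributions from distinct nuclei do not overlap to leading order, so only the nearest-nucleus shell structure matters.

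\textbf{Case analysis and optimization.} The transition scales $Z^{-8/9}$, $Z^{-7/9}$, and $Z^{-5/18}$ in \eqref{eqn-1.8}--\eqref{eqn-1.9} should emerge from optimizing which shell dominates the $y$-integral for an observation point $x$ at distance $\ell$: the core shell of radius $\lesssim Z^{-1}$, the Thomas--Fermi shell $Z^{-1}\ll r\ll Z^{-1/3}$, or the exterior tail. In each regime one balances the Coulomb-type contribution $\int G_\mu V\sqrt{\rho_\Psi}$ of each zone; the resulting powers of $\ell$ and $Z$ are exactly the advertised exponents, which one checks match at the transition scales. Part \ref{thm-1.1-iii} follows from the same machinery together with Benguria--Lieb-type exponential decay of $\rho_\Psi$ outside the ball of radius $\sim(Z-N)_+^{-1/3}$ for positive ions.

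\textbf{Main obstacle.} The delicate step is the intermediate regime $Z^{-8/9}\le \ell\le Z^{-1/3}$ of part \ref{thm-1.1-i}, where both the Coulomb singularity of $V$ near the nucleus and the concentration of $\rho_\Psi$ on the Scott scale contribute, and the trivial pointwise bound $\rho_\Psi=O(Z^3)$ of \cite{S1} is not tight enough to feed back into the HHT inequality. Here one must iterate: substitute the best available pointwise bound into the RHS on the shells closer to the nucleus and re-estimate, arranging that the exponents self-reinforce across scales. Closing this bootstrap consistently is what pins down the somewhat unusual exponents $19/9$, $17/9$, $9/10$, and $197/90$ in the statement.
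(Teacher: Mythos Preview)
Your proposal has genuine gaps and diverges from the paper's mechanism in ways that would prevent the argument from closing.

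First, the HHT input you describe is not the one that drives the result. The paper does \emph{not} use a subsolution inequality for $\sqrt{\rho_\Psi}$ with a Yukawa kernel; it uses the exact HHT identity for $\rho_\Psi(\x)$ itself, and its central new idea is to insert a radial cutoff $\phi(|x-\x|)$ supported in $B(\x,t)$ into that identity (Proposition~\ref{prop-2.1}). This produces a free truncation radius $t$ to be optimized later, at the cost of three extra terms: one coming from $\phi'''$, and two coming from the electron--electron repulsion and involving the \emph{two-point correlation function} $\rho^{(2)}_\Psi(x,y)$ (see (\ref{eqn-2.11})). Without this truncation, the untruncated HHT inequality (\ref{eqn-2.9}) combined with shell estimates on $\rho_\Psi$ only recovers the Siedentop bound $\rho_\Psi\le CZ^3$; no amount of dyadic decomposition of the right-hand side will beat that, because the integral $\int |x-\y_m|^{-2}\rho_\Psi\,dx$ is genuinely of size $Z^2$ regardless of where $\x$ sits.

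Second, the exponent $19/9$ does not come from a bootstrap on the one-body density, nor from ``Hardy-type inequalities'' on the residual $F$. It comes from a separate and substantial estimate (Section~\ref{sect-4}) on the correlation terms in (\ref{eqn-2.11}), using the Ruskai--Solovej/Graf--Solovej correlation inequality (Proposition~\ref{prop-4.1}) to compare $\rho^{(2)}_\Psi(x,y)$ with $\rho(x)\rho(y)$, together with the global energy upper and lower bounds (\ref{eqn-4.23})--(\ref{eqn-4.24}). Balancing the short-range cutoff scale $s$ in that analysis ($s^{-1}Z^{5/3}$ against $s^2Z^3$) yields $s=Z^{-4/9}$ and hence the $Z^{19/9}$ term; the final bound on $\rho_\Psi(\x)$ then arises by optimizing (\ref{eqn-5.8}) over the truncation radius $t$, balancing $Z\zeta^3\ell^{-3}t^4$ against $Z^{19/9}t^{-1}$. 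Your iteration scheme (feed the pointwise bound back into HHT and repeat) is not what is done, and it is not clear it would converge to these exponents, since the dominant contribution to the untruncated HHT integral always comes from near the nucleus where the pointwise bound never improves below $Z^3$.
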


\enlargethispage{1.5\baselineskip}

\begin{remark}\label{rem-1.2}
\begin{enumerate}[label=(\roman*), wide, labelindent=0pt]
\item\label{rem-1.2-i}
We would like to prove an estimate $\rho_\Psi (x)\le C\zeta^3$, or to discover that it does not necessarily hold.

\item\label{rem-1.2-ii}
We marginally improved our estimate (of the previous version) using \cite{ivrii:el-den-2}. We also added Statement~\ref{thm-1.1-iii}.
\end{enumerate}
\end{remark}

\textbf{Plan of the paper}.
In Section~\ref{sect-2} we prove a more subtle version of the main estimate of \cite{HHT}. In Section~\ref{sect-3} we provide upper estimates and asymptotics of  $\rho_\Psi$ integrated over small balls. In Section~\ref{sect-4} we study energy of electron-to-electron interaction (it involves a two-point correlation function) and in Section~\ref{sect-5} we prove upper estimates for $\rho_\Psi (x)$.

\chapter{Main intermediate inequality}
\label{sect-2}

We start from the main intermediate equality.

\begin{proposition}\label{prop-2.1}
Let $\Psi$ be an eigenfunction  of $\sfH_N$ with an eigenvalue $\lambda$. Let $\phi(|x|)$ be a real-valued spherically symmetric function.
Then
\begin{multline}
\rho _\Psi (0)= \\
(2\pi)^{-1}N \iint  (\partial_r K(x))\Psi (x,x_2,\ldots,x_N)\Psi^* (x,x_2,\ldots,x_N)\phi  (|x|)\,dxdx_2\cdots dx_N \\
-(8\pi)^{-1} \int\rho_\Psi(x)  \phi'''(|x|)\,dx,
\label{eqn-2.1}
\end{multline}
where 
\begin{equation}
K(x) \coloneqq -\sfH_N -(\partial_{r}^2 +2r^{-1}\partial_{r})
\label{eqn-2.2}
\end{equation}
is an operator in the auxiliary space 
$\cH\coloneqq \bigotimes_{n=2,\ldots,N} \sL^2(\bR^3,\bC^q) \otimes \bC^q$ with an inner product $\langle.,.\rangle$, $\phi'''(r)=\partial_r^3\phi(r)$ and 
$x=(r,\theta)\in \bR^+\times \bS^2$.
\end{proposition}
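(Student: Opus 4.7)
The plan is to combine the eigenvalue equation $\sfH_N\Psi=\lambda\Psi$ with repeated integration by parts in the radial variable $r=|x|$, arranging that $\rho_\Psi(0)$ emerges as a boundary contribution at $r=0$ while the bulk pieces are re-expressed in the two forms appearing on the right of \eqref{eqn-2.1}.

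The first step is an algebraic identity. Writing the eigenvalue equation as $D\Psi+K\Psi=-\lambda\Psi$ with $D\coloneqq\partial_r^2+2r^{-1}\partial_r$, differentiating in $r$, and using the commutator $[\partial_r,D]=-2r^{-2}\partial_r$, I would obtain
\[
(\partial_r K)\Psi=(\sfH_N-\lambda)\partial_r\Psi+2r^{-2}\partial_r\Psi.
\]
Multiplying by $\Psi^*\phi(|x|)$ and integrating in all variables, I would invoke self-adjointness of $\sfH_N$ together with $(\sfH_N-\lambda)\Psi=0$ to convert $\iint\Psi^*\phi(\sfH_N-\lambda)\partial_r\Psi\,dxdy$ into $-\iint\Psi^*[\sfH_N,\phi]\partial_r\Psi\,dxdy$. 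Because only $-\Delta_{x_1}$ in $\sfH_N$ fails to commute with the radial $\phi(|x_1|)$, this commutator equals $-(\phi''+2r^{-1}\phi')-2\phi'\partial_r$, so the first term on the right of \eqref{eqn-2.1} is now written as a concrete sum of integrals of $\Psi^*\partial_r^k\Psi$ ($k=1,2$) paired with $\phi,\phi',\phi''$ and with the $2r^{-2}\partial_r\Psi$ piece.

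The next step is to turn these into integrals of $\rho_\Psi$ alone. Using $\Re(\Psi^*\partial_r\Psi)=\tfrac12\partial_r|\Psi|^2$ and $\Re(\Psi^*\partial_r^2\Psi)=\tfrac12\partial_r^2|\Psi|^2-|\partial_r\Psi|^2$, integrating over the angular variable of $x_1$ and over $(x_2,\ldots,x_N)$, everything reduces to a one-dimensional integral against the spherical average $F(r)\coloneqq\int_{\bS^2}\rho_\Psi(r\theta)\,d\theta$. I would then integrate by parts three times in $r$, using the $r^2\,dr$ spherical volume element, to move all derivatives onto $\phi$ until the bulk integrand becomes $\rho_\Psi\cdot\phi'''$. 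The boundary terms at $r=\infty$ vanish by decay of $\Psi$; the boundary terms at $r=0$ vanish after the first two IBPs because $(r^2\phi')|_{r=0}=0$ and $(r^2\phi')'|_{r=0}=0$, but the third IBP produces the nonzero contribution $(r^2\phi')''|_{r=0}=2\phi'(0)$-type boundary piece that, combined with $F(0)=4\pi\rho_\Psi(0)$, yields exactly the term $\rho_\Psi(0)$ on the left of \eqref{eqn-2.1}; the coefficients $(2\pi)^{-1}$ and $(8\pi)^{-1}$ arise from $|\bS^2|=4\pi$ together with the factor $\tfrac12$ inherent in the $\Psi\Psi^*$ symmetrization.

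The hard part is the disappearance of the kinetic-energy density $|\partial_r\Psi|^2$ that shows up when one expands $\Re(\Psi^*\partial_r^2\Psi)$. This cancellation is what makes \eqref{eqn-2.1} genuinely stronger than a pure IBP identity: it requires feeding the radial eigenvalue equation $D\Psi=-\lambda\Psi-K\Psi$ back into the intermediate formulas a second time, so that the $|\partial_r\Psi|^2$ contribution is absorbed against the $(\partial_r K)$-term on the left side of the identity. The other delicate point is the behaviour at $r=0$: since $\Psi$ only satisfies Kato-type cusp conditions at the nuclei, the IBPs near the origin should be justified either by cutting off to $r\geq\varepsilon$ and letting $\varepsilon\to0^+$, or more cleanly by working throughout with the spherical average $F(r)$, which is smooth in $r$ even when $\Psi$ (and $\rho_\Psi$) fails to be smooth as a function on $\bR^3$.
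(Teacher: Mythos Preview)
Your route is genuinely different from the paper's, and in its present form it has a real gap at exactly the point you flag as ``the hard part''.

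The paper does not manipulate $\Psi$ directly. It passes to $u\coloneqq r\Psi$, which turns the three–dimensional radial Laplacian into a bare second derivative via $\partial_r^2 u=r\Delta_r\Psi$, and starts from the truncated Hoffmann-Ostenhof--Thirring identity
\[
\langle\Psi(0),\Psi(0)\rangle_{\cH}
=-(2\pi)^{-1}\!\int\langle\partial_r u,\partial_r^2 u\rangle\,\phi\,r^{-2}\,dx
-(4\pi)^{-1}\!\int|\partial_r u|^2\,\phi'\,r^{-2}\,dx .
\]
Here $\rho_\Psi(0)$ is already present as the boundary value $\partial_r u|_{r=0}=\Psi(0)$ (with $\phi(0)=1$); no ``third IBP'' is needed to produce it. The eigenvalue equation enters through $\partial_r^2 u=-r(K+\lambda)\Psi$ and is invoked twice: once in the first integral (an integration by parts then yields the $(\partial_r K)$–term together with a contribution $\int\langle\Psi,(K+\lambda)\Psi\rangle\phi'\,dx$), and once in the second integral, after one integrates $|\partial_r u|^2\phi'$ by parts to $\langle u,\partial_r^2 u\rangle\phi'$. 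The paper's point is that these two $(K+\lambda)\phi'$ contributions cancel, and the surviving piece $\langle u,\partial_r u\rangle\phi''=\tfrac12\partial_r(r^2|\Psi|^2)\phi''$ integrates by parts once more to the $\phi'''$–term.

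Your commutator identity $(\partial_r K)\Psi=(\sfH_N-\lambda)\partial_r\Psi+2r^{-2}\partial_r\Psi$ is correct, as is the computation of $[\sfH_N,\phi]$. But when you expand, the term $-2\int\phi'\,|\partial_r\Psi|^2_{\cH}\,dx$ appears, and your suggested fix---feed $D\Psi=-(K+\lambda)\Psi$ back in---replaces $|\partial_r\Psi|^2$ using $\Re\langle\Psi,\partial_r^2\Psi\rangle=\tfrac12\partial_r^2|\Psi|^2-|\partial_r\Psi|^2$ and $\partial_r^2\Psi=-(K+\lambda)\Psi-2r^{-1}\partial_r\Psi$, thereby producing a stray $\int\langle\Psi,(K+\lambda)\Psi\rangle\,\phi'\,dx$. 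In your framework there is nothing for this to cancel against: both uses of the eigenvalue equation deposit their $(K+\lambda)$–contributions on the same side. The $u=r\Psi$ substitution is not cosmetic---it is exactly what organises the two $(K+\lambda)\phi'$ pieces so that they meet with opposite signs. A second, smaller issue: your description of the boundary term is off. The nonzero contribution at $r=0$ comes from $\phi(0)$ (via $\int\phi\,\partial_r|\Psi|^2\,dr$ arising from the $2r^{-2}\partial_r\Psi$ piece), not from $(r^2\phi')''|_{r=0}=2\phi'(0)$ as you write.
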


\begin{proof}
Let us consider $\Psi$ as a function of $x\in \bR^3$ with values in  the auxiliary space $\cH$,  and and let $u=r \Psi$ where $(r,\theta)$ are spherical coordinates in $\bR^3$. Then similar to (9) of \cite{HHT}
\begin{align}
\langle \Psi (0),\Psi(0)\rangle  =& -(2\pi)^{-1}\int \langle \partial_r u,\partial_r ^2u \rangle \phi(r)r^{-2}\,dx \label{eqn-2.3}\\
&-(4\pi)^{-1}\int \langle \partial_r u,\partial_r u \rangle \phi'(r)r^{-2}\,dx\notag
\end{align}
and since $r^{-1}\partial_r^2 u = r\Delta_r\Psi \coloneqq r(\partial_r^2 +2r^{-1}\partial_r)\Psi $, the first term on the right is equal to
\begin{multline}
-(2\pi)^{-1} \iint \langle  \partial_ru,r\Delta_r \Psi  \rangle \phi(r)\,drd\theta\\
=\pi ^{-1} \iint \langle  \partial_ru,(K+\lambda) u\rangle \phi(r) drd\theta=\\
=-(2\pi)^{-1}\int \langle  \Psi,K' \Psi\rangle \phi(r)\, dx -(2\pi)^{-1}\iint  \langle \Psi ,(K+\lambda)\Psi \rangle \phi'(r)\,dx
\label{eqn-2.4}
\end{multline}
because $\Delta _r\Psi =-2(K+\lambda)\Psi$, where $-K$ is the rest of multiparticle Hamiltonian (including $-r^{-2}\Delta_\theta$) and we integrated by parts.

The first term in the latter formula is a corresponding term in \cite{HHT}, albeit truncated with $\phi$, and we have new terms
\begin{equation*}
-(4\pi)^{-1}\iint \langle \partial_r (r\Psi),\partial_r (r\Psi)\rangle \phi'(r)\,drd\theta  - 
(2\pi)^{-1} \int \langle \Psi ,(W+\lambda)\Psi \rangle\phi'(r)\,dx\,.
\end{equation*}
Integrating by parts the first term we get
\begin{equation*}
(4\pi)^{-1}\iint \blangle  r\Psi,\partial^2_r (r\Psi)\brangle \phi'(r)\,drd\theta  +
 (4\pi)^{-1}\iint \blangle  r\Psi,\partial_r (r\Psi)\brangle \phi''(r)\,drd\theta,
\end{equation*}
where the first term cancels with the second term in (\ref{eqn-2.4}), while the second term integrates by parts one more time resulting in
the last term in (\ref{eqn-2.1}). \end{proof}

Applying (\ref{eqn-2.1}) to our problem, and using skew-symmetry of $\Psi$, we get
\begin{multline}
\rho_\Psi (0) = (2\pi)^{-1}\int  \sum_{m} Z_m  \frac{x \cdot (x -\y_m)} {|x |\cdot |x -\y_m|^3}\rho_\Psi(x)\phi (|x|)\,dx \\
+ (2\pi)^{-1}N(N-1)  \int\underbracket{\frac{x_1\cdot (x_2-x_1)}{|x_1|\cdot |x_2-x_1|^3}}|\Psi (x_1,x_2,\ldots,x_N)|^2
\underbracket{\phi(|x_1|)}\, dx_1 \cdots dx_N\\
- (2\pi)^{-1} N \int |x |^{-3} |\nabla_{\theta} \Psi (x ,x_2,\ldots,x_N)|^2\phi (|x|) \, dx dx_2 \cdots dx_N\\ 
-(8\pi)^{-1} \int\rho_\Psi(x)  \phi ''' (|x|)\,dx.
\label{eqn-2.5}
\end{multline}

Symmetrizing the second term with respect to $x_1$ and $x_2$ we instead of the product of two indicated factors will get 
\begin{align*}
&\frac{1}{4} \bigl(\phi  (|x_1|)+\phi  (|x_2|)\bigr) 
\Bigl(\frac{x_1\cdot(x_2-x_1)}{|x_1|\cdot |x_2-x_1|^3}-\frac{x_2\cdot (x_2-x_1)}{|x_2| \cdot |x_2-x_1|^3}\Bigr) \\
+&\frac{1}{4} \bigl(\phi  (|x_1|)-\phi  (|x_2|)\bigr) 
\Bigl(\frac{x_1\cdot (x_2-x_1)}{|x_1|\cdot |x_2-x_1|^3}+\frac{x_2\cdot (x_2-x_1)}{|x_2|\cdot |x_2-x_1|^3}\Bigr)
\end{align*}
with the big parenthesis on the first line equal to
\begin{align}
-\frac{|x_1|+|x_2|}{|x_1-x_2|^3}\Bigl (1-\frac{x_1\cdot x_2}{|x_1|\cdot |x_2|}\Bigr)
\label{eqn-2.6}\\
\intertext{and the big parenthesis on the second line equal to}
-\frac{|x_1|-|x_2|}{|x_1-x_2|^3}\Bigl (1+\frac{x_1\cdot x_2}{|x_1|\cdot |x_2|}\Bigr)\,.
\label{eqn-2.7}
\end{align}
One can see that the former is negative, and the latter, multiplied by $\bigl(\phi  (|x_1|)-\phi(|x_2|)\bigr)$, is non-negative if $\phi$ is non-decreasing function.
Let us shift the origin to point $\x$ and observe that the first term in (\ref{eqn-2.5}) is equal to
\begin{equation}
(2\pi)^{-1} \int \sum_m Z_m \frac{(x-\x)\cdot (x-\y_m)}{|x-\x|\cdot |x-\y_m|^3}\rho_\Psi(x)\phi (|x-\x|)\,dx\,.
\label{eqn-2.8}
\end{equation}

Consider first case $\phi=1$. Then we get 
\begin{equation}
\rho_\Psi (\x) \le   (2\pi)^{-1} \int \sum_m Z_m |x-\y_m|^{-2}\rho_\Psi(x)\,dx.
\label{eqn-2.9}
\end{equation}
Indeed, the second term in the right-hand expression of (\ref{eqn-2.5}) is non-positive due to above analysis analysis, so is the third term, and the fourth term vanishes while the first term does not exceed the right-hand expression

Applying Proposition~\ref{prop-3.1}  below we arrive to the following estimate
\begin{equation}
\rho_\Psi (x) \le CZ^3.
 \label{eqn-2.10}
\end{equation}

In the general case  we arrive to

\begin{proposition}\label{prop-2.2}
In the framework of Proposition~\ref{prop-2.1}
\begin{multline}
\rho_\Psi (\x) \le (2\pi)^{-1} \int \sum_m Z_m \frac{(x-\x)\cdot (x-\y_m)}{|x-\x|\cdot |x-\y_m|^3}\rho_\Psi(x)\phi (|x-\x|)\,dx\\
\begin{aligned}
&+Ct^{-1}\iint _{B(\x,t)\times B(\x,t)} |x-y|^{-1}\rho^{(2)}_\Psi (x,y)\,dxdy \\
&+ C\iint _{B(\x,t)\times (\bR^3\setminus B(\x,t))} |y-\x|^{-2}\rho^{(2)}_\Psi (x,y)\,dxdy\,,
\end{aligned}
 \label{eqn-2.11}
 \end{multline}
where
\begin{equation}
\rho ^{(2)}_\Psi (x,y)\coloneqq   N(N-1)\int |\Psi (x,y,x_3,\ldots,x_N)| ^2\, dx_3\cdots dx_N 
\label{eqn-2.12}
\end{equation}
is a two-point correlation function.
\end{proposition}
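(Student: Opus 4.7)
The plan is to apply Proposition~\ref{prop-2.1} with the origin shifted to $\x$, taking the cutoff $\phi=\phi(|x-\x|)$, so that the identity~(\ref{eqn-2.1}) unfolds into a four-term decomposition of exactly the shape~(\ref{eqn-2.5}) but re-centred at $\x$. Of these four pieces, (i) the nuclear Coulomb term is precisely the first term on the right-hand side of~(\ref{eqn-2.11}) and is retained as is, and (ii) the angular kinetic term $-(2\pi)^{-1}N\int |x-\x|^{-3}|\nabla_\theta\Psi|^2\phi\,dx\,dx_2\cdots dx_N$ is manifestly $\le 0$ and may be dropped. The task is therefore to bound the remaining electron-electron integral together with the $\phi'''$-term above by the two $\rho^{(2)}_\Psi$-expressions in~(\ref{eqn-2.11}), for a suitable non-decreasing cutoff $\phi$ with $\phi(0)=1$ and $\|\phi'\|_\infty\le Ct^{-1}$.

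For the electron-electron integral I would follow the symmetrization already performed after~(\ref{eqn-2.5}). The even part carrying $\phi(|x_1-\x|)+\phi(|x_2-\x|)$ pairs with the non-positive expression~(\ref{eqn-2.6}) and is discarded. The odd part carrying $\phi(|x_1-\x|)-\phi(|x_2-\x|)$ is then split by region. On the inner region $B(\x,t)\times B(\x,t)$ I would use $|\phi(|x_1-\x|)-\phi(|x_2-\x|)|\le Ct^{-1}|x_1-x_2|$ together with the $|x_1-x_2|^{-3}$-factor in~(\ref{eqn-2.7}) to produce the kernel $t^{-1}|x_1-x_2|^{-1}$ and obtain the second term of~(\ref{eqn-2.11}). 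On the mixed region $B(\x,t)\times(\bR^3\setminus B(\x,t))$ I would instead use the cruder pointwise bound $|(x_1-\x)\cdot(x_2-x_1)|/(|x_1-\x|\cdot|x_2-x_1|^3)\le|x_2-x_1|^{-2}$ together with $|x_2-x_1|\ge\tfrac12|x_2-\x|$ (which holds because $|x_1-\x|\le t\le|x_2-\x|$) to produce the kernel $|x_2-\x|^{-2}$ and obtain the third term.

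The $\phi'''$-contribution $-(8\pi)^{-1}\int\rho_\Psi(x)\phi'''(|x-\x|)\,dx$ I would handle either by choosing $\phi$ with $\phi'''\ge 0$ (so that the term is $\le 0$ and drops outright) or, more flexibly, by writing $\rho_\Psi(x)=(N-1)^{-1}\int\rho^{(2)}_\Psi(x,y)\,dy$ and absorbing the resulting integrand, estimated by $|\phi'''|\le Ct^{-3}$ on the transition region of width $\sim t$, into the same two $\rho^{(2)}_\Psi$-regions that already appear on the right-hand side of~(\ref{eqn-2.11}).

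The main obstacle will be the simultaneous construction of $\phi$: the normalization $\phi(0)=1$ imposed by the derivation of~(\ref{eqn-2.3}), the sign condition (non-decreasing and $\ge 0$) needed to discard the even symmetrized piece, and the derivative estimates $\|\phi'\|_\infty\le Ct^{-1}$ and $\|\phi'''\|_\infty\le Ct^{-3}$ all have to be met simultaneously, and the outer-region estimate must be calibrated so that the extracted kernel is $|x_2-\x|^{-2}$ rather than $|x_2-x_1|^{-2}$. Once a suitable $\phi$ is in hand, summing the retained nuclear term with the three bounds above yields~(\ref{eqn-2.11}).
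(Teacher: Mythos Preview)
Your approach matches the paper's: the argument for Proposition~\ref{prop-2.2} is essentially the discussion between~(\ref{eqn-2.5}) and the statement itself, and you have reconstructed it correctly---drop the angular kinetic term and the even symmetrized piece~(\ref{eqn-2.6}) by sign, then bound the odd piece~(\ref{eqn-2.7}) in two regions. Two small corrections are worth noting. First, the cutoff $\phi$ should be \emph{non-increasing} (supported in $[0,t]$, equal to $1$ near $0$); a non-decreasing $\phi$ with $\phi(0)=1$ cannot serve as a cutoff, and in any case the even piece~(\ref{eqn-2.6}) only requires $\phi\ge 0$, not monotonicity. Second, the inequality $|x_2-x_1|\ge\tfrac12|x_2-\x|$ fails when $|x_1-\x|$ is close to $t$ and $|x_2-\x|$ is just above $t$; the fix is to take $\phi$ supported in $[0,t]$ with $\phi\equiv 1$ on $[0,t/2]$, treat the strip $t/2\le|x_1-\x|\le t$ via the Lipschitz bound $\phi(|x_1-\x|)=\phi(|x_1-\x|)-\phi(t)\le Ct^{-1}|x_1-x_2|$ (absorbing it into the $t^{-1}|x-y|^{-1}$ term, possibly over $B(\x,t)\times B(\x,2t)$), and reserve the $|y-\x|^{-2}$ bound for $|x_2-\x|\ge 2t$.

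On the $\phi'''$ term: your option of choosing $\phi'''\ge 0$ is not available for a compactly supported non-increasing cutoff (it would force $\phi'$ convex with $\phi'\le 0$ and vanishing at both ends, hence $\phi'\equiv 0$). Your absorption option also does not quite close, since $t^{-3}(N-1)^{-1}$ cannot be dominated by $|y-\x|^{-2}$ for large $|y-\x|$. The honest bound is simply $Ct^{-3}\int_{B(\x,t)}\rho_\Psi$, which the paper effectively reinserts later as the $C\zeta^3$ term in~(\ref{eqn-5.8}); strictly speaking~(\ref{eqn-2.11}) as stated omits it, so you are right to flag it.
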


Recall that  
\begin{equation}
\int \rho ^{(2)}_\Psi (x,y)dy =(N-1)\rho _\Psi (x).
\label{eqn-2.13}
\end{equation}

\begin{remark}\label{rem-2.3}
\begin{enumerate}[label=(\roman*), wide, labelindent=0pt]
\item\label{rem-2.3-i} 
Inequality (\ref{eqn-2.9}) for $M=1$ and $\x=\y_1$ is the main result of \cite{HHT}. Our main achievement so far is an introduction of the truncation $\phi$. However it brings three new terms in the right-hand expression of the estimate.
\item\label{rem-2.3-ii}
Estimate (\ref{eqn-2.10}) (with a specified albeit not sharp constant) was proven in \cite{S1} for $x=\y_m$.
\item\label{rem-2.3-iii}
This estimate definitely has a correct magnitude as $|x-\y_m|\lesssim Z^{-1}$ and $Z_m\asymp Z$.
\end{enumerate}
\end{remark}

\chapter{Estimates of the averaged electronic density}
\label{sect-3}

We will need the following estimate (\ref{strongscott-eqn-3.3})  from \cite{ivrii:strong-scott}:
\begin{equation}
\int U\rho_\Psi \,dx  \le 
\Tr (H_{W+\nu}^-)   -  \Tr (H_{W+U+\nu}^-) +C Z^{5/3-\delta}
\label{eqn-3.1}
\end{equation}
with $\delta=\delta(\sigma)$, $\delta>0$ for $\sigma>0$ and $\delta=0$ for $\sigma=0$.

First, we use this estimate in the very rough form:

\begin{proposition}\label{prop-3.1}
The following estimate holds:
\begin{equation}
\int |x-\y_m|^{-2}\rho_\Psi (x)\,dx \le CZ^2.
\label{eqn-3.2}
\end{equation}
\end{proposition}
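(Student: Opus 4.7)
The plan is to apply estimate (\ref{eqn-3.1}) with a test potential $U$ chosen so that $U$ dominates a fixed positive multiple of $|x-\y_m|^{-2}$ on most of space (so $\int U\rho_\Psi$ controls the integral we want) while the trace difference on the right-hand side stays $O(Z^2)$. The natural candidate is the capped variant
\[
U(x) = \alpha\,\min\!\bigl(|x-\y_m|^{-2},\, r_\ast^{-2}\bigr), \qquad r_\ast \asymp Z^{-1},
\]
with $\alpha>0$ a small fixed constant; capping keeps $U$ bounded by $\alpha Z^2$, avoiding the pathologies a bare $1/r^2$-perturbation would cause (it would otherwise yield an unbounded first-order semiclassical energy shift at $\y_m$).

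The crucial step will be to show $\Tr(H_{W+\nu}^-)-\Tr(H_{W+U+\nu}^-)\le C\alpha Z^2$. I would use concavity of $A\mapsto \Tr A^-$ (equivalently, integrate the Hellmann--Feynman identity $\partial_t\Tr(H_{W+tU+\nu}^-) = -\int U\rho_t\,dx$ over $t\in[0,1]$), reducing the task to bounding $\int U(x)\rho_t(x)\,dx$, where $\rho_t$ is the diagonal of the negative spectral projector of $H_{W+tU+\nu}$. Since $U$ is bounded, a semiclassical / Lieb--Thirring-type pointwise estimate $\rho_t(x)\le C(W(x)+tU(x)+\nu)_+^{3/2}$ applies, and matters reduce to bounding $\int U\,(W+U+\nu)_+^{3/2}\,dx$. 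Using $W(x)\sim Z_m/|x-\y_m|$ near each nucleus, a direct computation split into the three natural regions $|x-\y_m|<r_\ast$, $r_\ast<|x-\y_m|<Z^{-1/3}$, and $|x-\y_m|>Z^{-1/3}$ gives the desired $O(\alpha Z^2)$ bound, with the principal contribution $\alpha Z^{3/2}\int_{r_\ast}^{Z^{-1/3}}r^{-3/2}\,dr\sim \alpha Z^2$ coming from the middle range.

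Substituting back into (\ref{eqn-3.1}) and dividing by $\alpha$ yields $\int\min(|x-\y_m|^{-2},r_\ast^{-2})\rho_\Psi\,dx\le CZ^2$, which in particular controls the contribution to the target integral (\ref{eqn-3.2}) from $|x-\y_m|\ge r_\ast$. For the complementary inner region $|x-\y_m|<r_\ast$, I would invoke the pointwise bound $\rho_\Psi(x)\le CZ^3$ proved near the nuclei in \cite{S1}; integrating $|x-\y_m|^{-2}\cdot CZ^3$ over a ball of radius $r_\ast = Z^{-1}$ again contributes $O(Z^2)$. Combining the two regimes delivers (\ref{eqn-3.2}).

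The main obstacle will be rigorously justifying the semiclassical pointwise bound on $\rho_t(x)$ in the presence of the Coulomb singularity of $W$ together with the capped $1/r^2$-perturbation (standard Lieb--Thirring-type pointwise estimates handle this, but one must carefully track constants uniformly in $t\in[0,1]$). A secondary technical point is to arrange the inner-region argument so that the invocation of $\rho_\Psi\le CZ^3$ relies only on \cite{S1} and is not circular with the later estimate (\ref{eqn-2.10}).
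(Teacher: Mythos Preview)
Your overall strategy---apply (\ref{eqn-3.1}) with a suitable $U\sim |x-\y_m|^{-2}$, bound the trace difference semiclassically by $O(Z^2)$, and treat the region $|x-\y_m|\lesssim Z^{-1}$ separately---is the paper's strategy too. The differences are in execution, and two of your choices create genuine gaps.

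\textbf{The pointwise density bound.} You invoke a ``Lieb--Thirring-type pointwise estimate'' $\rho_t(x)\le C(W(x)+tU(x)+\nu)_+^{3/2}$ for the diagonal of the negative spectral projector. No such pointwise bound holds in general: if the potential vanishes at a point but is positive nearby, eigenfunctions (and hence $\rho_t$) are typically nonzero there while the right side is zero. What \emph{is} available is the semiclassical Weyl approximation for the trace itself, with controlled error, and this is exactly what the paper uses: away from the nuclei it replaces $-\Tr(H_{W+U+\nu}^-\psi)$ by $\frac{2}{5}\kappa\int(W+U+\nu)_+^{5/2}\psi\,dx$ with error $O(Z^2)$, then bounds the difference of Weyl expressions by $C\int[(W+\nu)_+^{3/2}U+U^{5/2}]\,dx=O(Z^2)$---the same integral you compute, but reached without any pointwise density claim.

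\textbf{The inner region.} You plan to control $\int_{|x-\y_m|<Z^{-1}}|x-\y_m|^{-2}\rho_\Psi$ using $\rho_\Psi\le CZ^3$ from \cite{S1}. But \cite{S1} proves that bound only at $x=\y_m$ (see Remark~\ref{rem-2.3}\ref{rem-2.3-ii}); the extension to all $x$ is precisely (\ref{eqn-2.10}), which the paper derives \emph{from} Proposition~\ref{prop-3.1} via (\ref{eqn-2.9}). So your inner-region step is either circular or unsupported. The paper avoids this entirely by taking $U=\epsilon|x-\y_m|^{-2}$ \emph{uncapped}: for small $\epsilon$ the singular part is controlled by Hardy's inequality, and a Dirichlet-bracketing argument on the ball $\{\ell(x)\le 4Z^{-1}\}$ bounds $\Tr(H_{W+U+\nu}^-\psi_0)$ directly by $O(Z^2)$ at the trace level, never touching $\rho_\Psi$.

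In short, capping $U$ forces you to deal with $\int_{r<Z^{-1}}r^{-2}\rho_\Psi$ separately, and the tool you reach for is not available; the paper keeps the singularity in $U$ and absorbs it via Hardy at the operator level.
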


\begin{proof}
Let $\psi(x)$, $\psi_0(x)$ be  cut-off functions, $\psi(x)=0$ in $\{x\colon \ell(x)\le b\}$, $\psi_0(x)=0$ in $\{x\colon \ell(x)\ge 2b\}$, $\psi+\psi_0=1$, $ b= Z^{-1}$.  Then
\begin{equation}
\Tr (H_{W+U+\nu}^-)=\Tr (H_{W+U+\nu}^-\psi_0) + \Tr (H_{W+U+\nu}^-\psi)
 \label{eqn-3.3}
\end{equation}
Using the semiclassical methods of \cite{monsterbook:5}, Section \ref{monsterbook-sect-25-4} in the simplest form, we conclude that for 
 $U= \epsilon |x-\y_m|^{-2}$ the second term on the right (with an opposite sign) could be replaced by its Weyl approximation 
\begin{equation}
-\frac{2}{5}\kappa \int (W+U+\nu)^{5/2}_+ \psi (x)\,dx
\label{eqn-3.4}
\end{equation}
with an error not exceeding $CZ^2$ where here and below $\kappa=q/(6\pi^2)$. The same is true for $U=0$. One can see easily that the difference between expression (\ref{eqn-3.4}) and the same expression for $U=0$ does not exceed
\begin{equation}
C\int \bigl[(W+\nu)_+^{3/2} U + U^{5/2} \bigr]\,dx ,
\label{eqn-3.5}
\end{equation}
which does not exceed $CZ^2$. 

Consider the first term in the right-hand expression of (\ref{eqn-3.3}). Using variational methods of \cite{monsterbook:5}, Section \ref{monsterbook-sect-9-1} we can reduce it to the analysis of the same operator in $\cX^0\coloneqq \{x\colon \ell(x)\le 4b\}$ with the Dirichlet boundary conditions on $\partial X$. Observing that eigenvalue counting function for such operator is $O(1+\lambda ^{3/2}Z^{-3})$ 
(for $\epsilon$ sufficiently small), we conclude that the first term in (\ref{eqn-3.3}) also does not exceed $CZ^2$. Estimate (\ref{eqn-3.2}) has been proven.
\end{proof}

Let us return to (\ref{eqn-3.1}) and consider $U=   \zeta^2 \phi_t (x;\x)$ where $\x$ is a fixed point with  
\begin{align}
&\ell(x)\coloneqq \min_m |x-\y_m|\ge Z^{-1}
\label{eqn-3.6},\\
&\zeta (x)\coloneqq  \max \bigl(Z^{1/2}\ell(x)^{-1/2},\,\ell(x)^{-2}\bigr)
\label{eqn-3.7}
\end{align}
and $\phi_t (x;\x) = \phi _0 (t^{-1}|x-\x|)$, $\phi \in \sC_0^\infty ([-1,1])$, $0\le \phi \le 1$. We assume that
\begin{equation}
\zeta^{-1}\le t \le \frac{\ell }{2}
\label{eqn-3.8}
\end{equation}
with $\ell=\ell(\x)$, $\zeta=\zeta(\x)$,  where the last inequality allows us to apply semiclassical methods. Consider with $0\le \varsigma \le 1$
\begin{multline}
\Tr (H_{W+\nu}^-)   -  \Tr (H_{W+\varsigma U+\nu}^-)\\
=\int_0^\varsigma  \Tr \Bigl( U \bigl[ \uptheta(-H_{W+sU +\nu}) - \uptheta(-H_{W+\nu})\bigr]\Bigr)\,ds
\label{eqn-3.9}
\end{multline}
and apply semi-classical method to the right-hand expression. Then we get
\begin{multline}
\Tr (H_{W+\nu}^-)   -  \Tr (H_{W+\varsigma  U+\nu}^-)\\
\begin{aligned}
&=\kappa \int \int_0^\varsigma  \Bigl(U\bigl[(W+sU+\nu)_+^{3/2} - (W+\nu )_+^{3/2} \bigr]\Bigr)\,dxds
+ O(\varsigma \zeta^4t^2)\\
&= \frac{2}{5}\kappa \int\Bigl((W+\varsigma U+\nu)_+^{5/2} - (W+\nu )_+^{5/2}\Bigr)\,dx + O(\varsigma \zeta^4t^2).
\end{aligned}
\label{eqn-3.10}
\end{multline}

Indeed, factor $U$ is $O(\zeta^2)$ and therefore the semiclassical error is $O(\zeta^4t^2)$ since the effective semiclassical parameter is $h=(\zeta t)^{-1}$. Observe that the principal part in the right-hand expression does is  $O(\varsigma \zeta^5t^3)$.

Then after division by $ \varsigma \zeta^2$ (\ref{eqn-3.1}) becomes
\begin{multline}
 \int \phi_t (x;\x)\rho_\Psi (x)\,dx \le  \int \phi_t (x;\x)\rho  (x)\,dx \\
 +  C\Bigl(\zeta^2t^2 + \varsigma \zeta^3t^3 + \varsigma^{-1} Z^{5/3-\delta}\Bigr)\,.
 \label{eqn-3.11}
\end{multline}
Replacing $\phi_t (x;\x)$ by $-\phi_t (x;\x)$ in this inequality and minimizing by $\varsigma\in (0,1]$  we arrive to the first statement of the following proposition:

\begin{proposition}\label{prop-3.2}
\begin{enumerate}[label=(\roman*), wide, labelindent=0pt]
\item\label{prop-3.2-i}
Under assumptions \textup{(\ref{eqn-3.6})}--\textup{(\ref{eqn-3.8})}
\begin{multline}
|\int ( \rho_\Psi (x)-\rho(x)) \phi_t (x;\x)\,dx |\\
\le 
C\Bigl(\zeta^2t^2 +  \zeta^{1/2} t^{3/2} Z^{5/6-\delta/2} + \zeta^{-2}Z^{5/3-\delta}\Bigr)\,.
\label{eqn-3.12}
\end{multline}
\item\label{prop-3.2-ii}
Further,
\begin{equation}
|\int  \rho_\Psi (x)\phi_t (x;\x)\,dx |\le 
C\Bigl(\zeta^3t^3 +  t^{6/5}Z^{1-3\delta/5}\Bigr)\,.
\label{eqn-3.13}
\end{equation}
\item\label{prop-3.2-iii}
Furthermore, if $N<Z$ then
\begin{equation}
|\int  \rho_\Psi (x)\phi_t (x;\x)\,dx |\le Ct^{6/5}Z^{1-3\delta/5}\qquad 
\text{for\ \ } \ell (\x) \ge C_0(Z-N)_+^{-1/3}.
\label{eqn-3.14}
\end{equation}
\end{enumerate}
\end{proposition}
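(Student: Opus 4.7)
Part~\ref{prop-3.2-i} is essentially what the preceding text already delivers: differentiating the right-hand side of \textup{(\ref{eqn-3.11})} (read with $\varsigma^{-1}\zeta^{-2}Z^{5/3-\delta}$ in the last slot, for consistency with \textup{(\ref{eqn-3.12})}) in $\varsigma$ gives the interior critical point $\varsigma^*=\zeta^{-5/2}t^{-3/2}Z^{5/6-\delta/2}$, valid precisely when $\zeta\ge\zeta_*:=Z^{1/3-\delta/5}t^{-3/5}$, and yielding the middle term $\zeta^{1/2}t^{3/2}Z^{5/6-\delta/2}$ of~\textup{(\ref{eqn-3.12})}; in the complementary regime $\varsigma=1$ is the boundary minimum and produces the last term $\zeta^{-2}Z^{5/3-\delta}$.

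For Part~\ref{prop-3.2-ii} my plan is to redo the derivation of \textup{(\ref{eqn-3.11})}, but with a more flexible test potential $U=c\phi_t(x;\x)$ in which $c>0$ is treated as a free amplitude rather than pegged to the background value $\zeta^2$. The semiclassical version of \textup{(\ref{eqn-3.9})}--\textup{(\ref{eqn-3.10})} then reads
\begin{equation*}
c\int\phi_t\,\rho_\Psi\,dx \;\le\; c\int\phi_t\,\rho\,dx \;+\; O(c^{5/2}t^3) \;+\; O\!\bigl(\max(c,\zeta^2)^2 t^2\bigr) \;+\; CZ^{5/3-\delta},
\end{equation*}
the $c^{5/2}t^3$ coming from the Taylor correction of $(W+\nu+sU)_+^{5/2}$ when $sU$ dominates the background, and the $\max(c,\zeta^2)^2 t^2$ being the corresponding Weyl remainder. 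Combining with the Thomas--Fermi estimate $\int\phi_t\,\rho\,dx\le C\zeta^3 t^3$ (from $\rho\le C\zeta^3$ on the ball and volume $t^3$), dividing by $c$, and choosing $c=Z^{2/3-2\delta/5}t^{-6/5}=\zeta_*^2$ so that $c^{3/2}t^3 = c^{-1}Z^{5/3-\delta} = t^{6/5}Z^{1-3\delta/5}$, delivers~\textup{(\ref{eqn-3.13})}.

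Part~\ref{prop-3.2-iii} exploits the vanishing of the Thomas--Fermi density outside the TF ion. Under the hypothesis $\ell(\x)\ge C_0(Z-N)_+^{-1/3}$ with $(Z-N)\ge C_0 Z^{5/6}$, the chemical potential $|\nu|\asymp(Z-N)^{4/3}$ dominates $W(x)$ on $\supp\phi_t\subset B(\x,t)\subset\{\ell\ge\ell(\x)/2\}$ (for $C_0$ sufficiently large), so $(W+\nu)_+\equiv 0$ and $\rho\equiv 0$ there. Feeding $\int\phi_t\,\rho\,dx=0$ into the argument of Part~\ref{prop-3.2-ii} removes the $c\zeta^3 t^3$ contribution and delivers~\textup{(\ref{eqn-3.14})}.

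The main obstacle will be careful bookkeeping of the Weyl remainder $\max(c,\zeta^2)^2 t^2/c$ after division by $c$: when $c\ge\zeta^2$ (the novel regime, active precisely when $\zeta\le\zeta_*$) this equals $ct^2=Z^{2/3-2\delta/5}t^{4/5}$, which must be shown to fit under $\zeta^3 t^3+t^{6/5}Z^{1-3\delta/5}$; the lower bound $t\ge\zeta^{-1}$ from~\textup{(\ref{eqn-3.8})} together with the case split $\zeta\gtrless\zeta_*$ should suffice, with the complementary regime $c\le\zeta^2$ being subsumed by Part~(i). For Part~(iii) one additionally confirms, via standard TF scaling, that the ion radius is $\asymp(Z-N)^{-1/3}$ uniformly in the molecular geometry allowed by~\textup{(\ref{eqn-1.7})}.
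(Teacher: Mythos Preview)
Your proposal is correct and follows essentially the same route as the paper. Part~(i) is exactly the paper's minimization of~(\ref{eqn-3.11}) over $\varsigma\in(0,1]$ (and you correctly flag the missing $\zeta^{-2}$ in the last term there); for Part~(ii) the paper simply drops the restriction $\varsigma\le 1$, obtains~(\ref{eqn-3.15}), and optimizes over all $\varsigma>0$---your ``free amplitude $c$'' is nothing but the reparametrization $c=\varsigma\zeta^2$, and the value $c=Z^{2/3-2\delta/5}t^{-6/5}$ you select is precisely the paper's optimizer, so the two arguments coincide. For Part~(iii) both you and the paper invoke the vanishing of $\rho^{\TF}$ beyond the Thomas--Fermi radius to delete the $\zeta^3t^3$ term from~(\ref{eqn-3.15}); note only that the extra hypothesis $(Z-N)\ge C_0Z^{5/6}$ you carry along belongs to Theorem~\ref{thm-1.1}\ref{thm-1.1-iii} and is not needed for Proposition~\ref{prop-3.2}\ref{prop-3.2-iii}, since $\rho^{\TF}=0$ for $\ell\ge C_0(Z-N)_+^{-1/3}$ is a general Thomas--Fermi fact valid whenever $N<Z$.
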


To prove the second statement, we consider $\varsigma >0$ (without restriction $\varsigma \le 1$); then instead of (\ref{eqn-3.11}) we have 
\begin{equation}
 \int \phi_t (x;\x)\rho_\Psi (x)\,dx \le  
 C\Bigl(\zeta^3t^3 +\varsigma ^{3/2}\zeta^3t^3 +  \varsigma^{-1}\zeta^{-2} Z^{5/3-\delta}\Bigr)
 \label{eqn-3.15}
\end{equation}
and we optimize it by $\varsigma>0$.

The third statement follows from the same arguments and the fact that  recall that $\rho^\TF(x)=0$ for 
$\ell (x) \ge C_0(Z-N)_+^{-1/3}$ and therefore (\ref{eqn-3.15}) holds without the first term in the right-hand expression.

\chapter{Estimates of the correlation function}
\label{sect-4}

We will need the following Proposition~\ref{monsterbook-prop-25-5-1} from \cite{monsterbook:5} (first proven in \cite{ruskai:solovej}):

\begin{proposition}\label{prop-4.1}
Let $\theta \in \sC^\infty(\bR^3)$, such that 
\begin{equation}
0\le \theta \le 1.
\label{eqn-4.1}
\end{equation}
Let $\chi \in \sC^\infty(\bR^6)$ and 
\begin{multline}
\cJ =|\int \Bigl(\rho _\Psi ^{(2)}(x,y)-
\rho(y)\rho _\Psi (x)\Bigr)\theta (x)\chi (x,y)\,dxdy |\le \\[3pt]
C\sup_x \|\nabla _y\chi \|_{\sL ^2(\bR_y ^3)}
\Bigl((Q+\varepsilon ^{-1}N+T)^{\frac{1}{2}}\Theta + P^{\frac{1}{2}}\Theta^{\frac{1}{2}}\Bigr) +
C \varepsilon N\|\nabla _y\chi \|_{\sL ^\infty }\Theta
\label{eqn-4.2}
\end{multline}
with 
\begin{gather}
Q=\D(\rho_\Psi-\rho^\TF,\, \rho_\Psi-\rho^\TF),
\label{eqn-4.3}\\
\Theta =\Theta_\Psi \coloneqq  \int \theta (x)\rho _\Psi (x)dx,
\label{eqn-4.4}\\
T= \sup _{\supp(\theta)} W,
\label{eqn-4.5}\\
P= \int |\nabla \theta^{\frac{1}{2}}|^2 \rho_\Psi \,dx.
\label{eqn-4.6}
\end{gather}
respectively and arbitrary
$\varepsilon \le Z^{-\frac{2}{3}}$.
\end{proposition}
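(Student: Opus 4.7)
The statement is quoted from \cite{monsterbook:5} and is originally due to Ruskai and Solovej \cite{ruskai:solovej}, so my plan is to reconstruct their argument: bound the two-body correlation in terms of the Coulomb deficit $Q=\D(\rho_\Psi-\rho^\TF,\,\rho_\Psi-\rho^\TF)$. First I would write
\[
\cJ=\bigl|\langle\Psi,\hat{\mathcal{A}}\Psi\rangle\bigr|,\quad
\hat{\mathcal{A}}=\sum_{i\ne j}\theta(x_i)\chi(x_i,x_j)-\sum_i\theta(x_i)\int\rho(y)\chi(x_i,y)\,dy,
\]
and split $\hat{\mathcal{A}}=\hat{\mathcal{A}}_{\mathrm{fl}}+\hat{\mathcal{A}}_{\mathrm{mf}}$ by inserting and subtracting the \emph{true} one-body mean: $\hat{\mathcal{A}}_{\mathrm{mf}}=\sum_i\theta(x_i)\int[\rho_\Psi-\rho](y)\chi(x_i,y)\,dy$ is a pure one-body mean-field error, while $\hat{\mathcal{A}}_{\mathrm{fl}}$ is the genuine two-body fluctuation in which $\rho_\Psi$ replaces $\rho$.

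For $\hat{\mathcal{A}}_{\mathrm{mf}}$ I would take the expectation, collect the $y$-integral against $g(y)=\int\theta(x)\rho_\Psi(x)\chi(x,y)\,dx$, and apply the Plancherel--Cauchy--Schwarz inequality $|\int fg\,dy|\le C\D(f,f)^{1/2}\|\nabla g\|_{\sL^2}$ with $f=\rho_\Psi-\rho$; this yields $|\langle\Psi,\hat{\mathcal{A}}_{\mathrm{mf}}\Psi\rangle|\le C\,Q^{1/2}\|\nabla_y g\|_{\sL^2_y}$, and Minkowski's integral inequality then gives $\|\nabla_y g\|_{\sL^2_y}\le \sup_x\|\nabla_y\chi(x,\cdot)\|_{\sL^2_y}\cdot\Theta$, producing the $Q^{1/2}\Theta$ contribution. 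For $\hat{\mathcal{A}}_{\mathrm{fl}}$ I would localise in $|x-y|$ at scale $\sim\varepsilon^{1/3}$: on the short-range piece, Lipschitz continuity in $y$ together with integration against $\rho^{(2)}_\Psi$ and the assumption $\varepsilon\le Z^{-2/3}$ yields the exchange term $\varepsilon N\|\nabla_y\chi\|_{\sL^\infty}\Theta$; on the long-range piece, Cauchy--Schwarz on the symmetric pair form turns the sum into a quadratic form controlled by the kinetic energy (delivering $\varepsilon^{-1}N$), the potential sup-norm $T$ on $\supp\theta$, and the boundary cost $P$ from $\theta^{1/2}$ not commuting with $\sqrt{-\Delta}$, which together give the $(Q+\varepsilon^{-1}N+T)^{1/2}\Theta+P^{1/2}\Theta^{1/2}$ piece.

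The main obstacle is the fluctuation term: one must extract \emph{precisely} $Q$, rather than the much larger $\D(\rho_\Psi,\rho_\Psi)$, from the mean-field cancellation, which hinges on choosing $\rho=\rho^\TF$ so that the Hartree energy nearly cancels the bare electron--electron repulsion; tuning the short/long-range split at optimal $\varepsilon\le Z^{-2/3}$ is also delicate. As the present paper invokes only the statement, I would not reproduce this routine-but-intricate derivation and simply defer to \cite{monsterbook:5,ruskai:solovej}.
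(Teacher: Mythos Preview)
Your approach matches the paper's exactly: Proposition~\ref{prop-4.1} is stated without proof and attributed to \cite{monsterbook:5} (originally \cite{ruskai:solovej}), so deferring to those references is precisely what the paper itself does. Your supplementary sketch of the Ruskai--Solovej argument is a reasonable high-level outline and in fact goes beyond what the paper provides.
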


We cannot apply it directly to estimate the second to the last term in (\ref{eqn-2.11}) because of singularities. Let us consider
\begin{equation}
\iint_{\bR^3\times \bR^3}  |x-y|^{-1}\rho^{(2)}_\Psi (x,y)\,dxdy .
\label{eqn-4.7} 
\end{equation}
Let us make an $\ell$-admissible partition of unity $\phi_\iota$ in  with $\ell$-admissible $\phi^{1/2}_\iota$. We set $\ell(x) =Z^{-1}$ if  $|x-\y_m|\le Z^{-1}$.  
Let us consider first
\begin{equation}
\int |x-y|^{-1}\rho^{(2)}_\Psi (x,y)\phi_\iota (x)\phi_\varkappa(y) \,dxdy
\label{eqn-4.8} 
\end{equation}
in the case of $\phi_\iota$ and $\phi_\kappa$ having disjoint supports. Without any loss of the generality we can consider 
$\ell_x \le \ell_y$, where subscripts $x,y$ are referring to supports of $\phi_\iota$, $\phi_\kappa$ respectively.

Let $\theta(x)=\phi_\iota(x)$ and
\begin{equation}
\chi(x,y)=\bar{\chi}(x,y)\coloneqq |x-y|^{-1} \bar{\phi}_\iota (x)\phi_\kappa(y)
\label{eqn-4.9}
\end{equation}
where 
$\bar{\phi}_\iota$ which are $\ell$-admissible and equal $1$ in the $\ell$-vicinity of $\supp(\phi_\iota)$. 
Then $T= \zeta_x$ and for $\ell_x \le Z^{-5/21}$  in virtue of Proposition~\ref{prop-3.2}\,\footnote{\label{foot-1} Indeed,  $\zeta^3\ell^3 \ge Z^{5/3-\delta}\zeta^{-2}\iff \ell \le Z^{-5/21+\delta/7}$.}
\begin{gather}
\Theta_\Psi \asymp \zeta_x^3\ell_x^3, \qquad P\asymp \zeta_x^3\ell_x
\label{eqn-4.10}\\
\shortintertext{and}
\|\nabla_y \chi \|_{\sL^2 (\bR^3_y))}\asymp  d_{x,y}^{-1} \ell_y^{1/2} , \qquad 
\|\nabla \chi\|_{\sL^\infty}\asymp d_{x,y}^{-1}\ell_y^{-1}
\label{eqn-4.11}
\end{gather}
where $d_{x,y}\ge \ell_y$ is the distance between  supports of $\phi_\iota$ and $\phi_\varkappa$. Then the right-hand expression of (\ref{eqn-4.2}) is
\begin{equation*}
C  \zeta_x^3\ell_x^3\Bigl( \ell_y^{-1/2}(Z^{5/6} + \zeta_x)  +\varepsilon^{-1/2}\ell_y^{-1/2} Z ^{1/2} 
+  \varepsilon Z  \ell_y^{-2} +\ell_y^{-1/2} \ell_x^{-1} \Bigr)
\end{equation*}
and minimizing by $\varepsilon \le Z^{-2/3}$ we get
\begin{equation*}
C \zeta_x^3\ell_x^3\Bigl( \ell_y^{-1/2}(Z^{5/6-\delta} + \zeta_x)  +  Z ^{2/3} \ell_y^{-1} +
  \ell_y^{-1/2} Z ^{5/6} +\ell_y^{-1/2} \ell_x^{-1} \Bigr).
\end{equation*}

Observe that all powers of $\ell_y$ are  negative. Therefore summation over all elements of $y$-partition results in the same expression albeit with $\ell_y$ replaced by $\ell_x=\ell$:
\begin{equation*}
C \zeta^3\ell ^2\Bigl( \ell ^{1/2}(Z^{5/6} + \zeta)  +  Z ^{2/3}  + \ell^{-1/2} + \ell^{1/2} Z ^{5/6} \Bigr).
\end{equation*}

For $\ell \le Z^{-1/3}$ we have $\zeta =Z^{1/2}\ell ^{-1/2}$ and all powers are positive with the exception of one term, where the power is $0$, 
and for $\ell\ge Z^{-1/3}$ we have $\zeta=\ell^{-2}$ and all powers are negative. Therefore summation over all elements of $x$-partition results in the same expression albeit with $\ell= Z^{-1/3}$, $\zeta=Z^{2/3}$, with the exception of one term which gains a logarithmic factor. We get $CZ^2$. Then
\begin{equation}
|\sum _{\iota, \varkappa} \iint \Bigl(\rho^{(2)}_\Psi (x,y)- \rho (y)\rho_\Psi (x)\Bigr)\phi_{\iota}(x)
\phi_{\kappa}(y),dxdy|\le CZ^2
\label{eqn-4.12}
\end{equation}
with summation over indicated pairs of elements of the partition 
(disjoint, with $\ell_x\le \min(Z^{-5/21+\delta/7},\, \ell_y)$ ).

Let us prove that 

\begin{claim}\label{eqn-4.13}
Estimate (\ref{eqn-4.12}) also holds with $\rho_\Psi(x)$ replaced by $\rho(x)$ and therefore it holds for a sum over paits of elements with 
$\min (\ell_x,\, \ell_y)\le \ell^*=Z^{-5/21+\delta/7}$. 
\end{claim}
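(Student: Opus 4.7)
The plan is to reduce, via the triangle inequality applied to (\ref{eqn-4.12}), to establishing
\[
|J|:=\biggl|\sum_{(\iota,\varkappa)\ \text{adm.}}\iint \bigl(\rho_\Psi(x)-\rho(x)\bigr)\rho(y)|x-y|^{-1}\phi_\iota(x)\phi_\varkappa(y)\,dxdy\biggr|\le CZ^2,
\]
where ``adm.''\ refers to the same constraints as in (\ref{eqn-4.12}) (disjoint supports, $\ell_\iota\le\min(\ell^*,\ell_\varkappa)$). The extension of this estimate to the sum over $\min(\ell_x,\ell_y)\le\ell^*$ will then follow by a symmetry argument.

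To bound $|J|$, I would first perform the inner sum over admissible $\varkappa$ at fixed $\iota$ to obtain
\[
J=\sum_\iota \int (\rho_\Psi-\rho)(x)\phi_\iota(x)W_\iota(x)\,dx,\qquad W_\iota(x):=\int\rho(y)|x-y|^{-1}\tilde g_\iota(y)\,dy,
\]
where $\tilde g_\iota:=\sum_{\varkappa\ \text{adm.}}\phi_\varkappa$ satisfies $0\le\tilde g_\iota\le 1$. Pointwise $W_\iota\le \rho\ast|\cdot|^{-1}$, and by the Thomas--Fermi identity $\rho\ast|\cdot|^{-1}=V-W-\nu$ the singular part of $V$ cancels against $W$; consequently $\|W_\iota\|_{L^\infty(\supp\phi_\iota)}\le CZ^{4/3}$ for $\ell_\iota\le Z^{-1/3}$ and $\le CZ/\ell_\iota$ for $Z^{-1/3}\le\ell_\iota\le\ell^*$. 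Combining $|J_\iota|\le\|W_\iota\|_\infty\cdot|\int(\rho_\Psi-\rho)\phi_\iota|$ with Proposition~\ref{prop-3.2}(i) applied at scale $t\asymp\ell_\iota$, the two factors balance at $\ell_\iota\asymp Z^{-1/3}$ to produce $CZ^{4/3}\cdot CZ^{2/3}=CZ^2$. Using the same sign-of-exponent analysis as in the derivation of (\ref{eqn-4.12}) --- positive powers of $\ell$ for $\ell<Z^{-1/3}$ (where $\zeta=Z^{1/2}\ell^{-1/2}$), negative powers for $Z^{-1/3}<\ell<\ell^*$ (where $\zeta=\ell^{-2}$) --- the dyadic summation is geometric and dominated by the critical scale, yielding $|J|\le CZ^2$.

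For the extension, the integrand $\bigl(\rho^{(2)}_\Psi(x,y)-\rho(x)\rho(y)\bigr)|x-y|^{-1}\phi_\iota(x)\phi_\varkappa(y)$ is now invariant under the joint swap $(x,y)\leftrightarrow(y,x)$, $(\iota,\varkappa)\leftrightarrow(\varkappa,\iota)$, since $\rho^{(2)}_\Psi$ is symmetric and so is $\rho(x)\rho(y)$. Hence the same $CZ^2$ bound holds for the mirror sum with constraint $\ell_\varkappa\le\min(\ell^*,\ell_\iota)$, and the union of the two admissibility sets is precisely $\{\min(\ell_\iota,\ell_\varkappa)\le\ell^*\}$. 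The main difficulty I expect is verifying that the dyadic summation in the middle step really yields $CZ^2$ without accumulating extraneous factors of $M$ (the number of nuclei); the per-nucleus contributions should scale with $Z_m^2$ (exploiting the inter-nuclear separation (\ref{eqn-1.7}) to treat TF near each nucleus as essentially atomic), so that $\sum_m Z_m^2\le Z\cdot\max_m Z_m\le Z^2$ closes the argument.
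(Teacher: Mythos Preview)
Your strategy---reduce to bounding $\sum_\iota \int(\rho_\Psi-\rho)\phi_\iota W_\iota$ via Proposition~\ref{prop-3.2} and a Thomas--Fermi bound on the potential $W_\iota=\rho\ast|\cdot|^{-1}\tilde g_\iota$, then invoke symmetry of $\rho^{(2)}_\Psi-\rho\otimes\rho$ to extend to $\min(\ell_x,\ell_y)\le\ell^*$---is the same idea the paper uses. The symmetry step and the $\|W_\iota\|_\infty\le CZ^{4/3}$ bound are both correct.

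There is, however, a genuine gap in the line
\[
|J_\iota|\le \|W_\iota\|_\infty\cdot\Bigl|\int(\rho_\Psi-\rho)\phi_\iota\Bigr|.
\]
This inequality is false for a signed integrand: you cannot pull a non-constant weight $W_\iota$ outside an integral whose integrand changes sign and then replace it by its sup-norm while keeping the absolute value \emph{inside}. Proposition~\ref{prop-3.2}\ref{prop-3.2-i} controls $\bigl|\int(\rho_\Psi-\rho)\phi_t\bigr|$ for the specific bump $\phi_t$, not the $L^1$-norm of $\rho_\Psi-\rho$ on the ball, so you cannot simply trade $W_\iota$ for its supremum.

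The paper handles this by working pair-by-pair rather than pre-summing over $\varkappa$: for a fixed pair $(\iota,\varkappa)$ the function $x\mapsto\int|x-y|^{-1}\rho(y)\phi_\varkappa(y)\,dy$ is smooth on scale $\ge\ell_\iota$ (since $d_{x,y}\ge\ell_\iota$), so it can be absorbed into the test function in the \emph{proof} of Proposition~\ref{prop-3.2} (estimate~(\ref{eqn-3.11}) before minimizing over $\varsigma$). This yields a per-pair bound $C\zeta_y^3\ell_y^2\times(\zeta_x^2\ell_x^2+\varsigma\zeta_x^3\ell_x^3+\varsigma^{-1}\zeta_x^{-2}Z^{5/3-\delta})$; after summing in $y$ (giving the factor $Z^{4/3}$ you found), then in $x$, and finally optimizing $\varsigma\sim Z^{-1/3}$, one gets $CZ^2$. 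Your argument can be repaired the same way: observe that $\phi_\iota W_\iota/\|W_\iota\|_\infty$ is itself an admissible bump on scale $\ell_\iota$ (you should check $|\nabla W_\iota|\lesssim \|W_\iota\|_\infty/\ell_\iota$, which follows from $\int\rho(y)|x-y|^{-2}\,dy\lesssim Z^{3/2}\ell^{-1/2}\le Z^{4/3}\ell^{-1}$ for $\ell\le Z^{-1/3}$), and then apply Proposition~\ref{prop-3.2}\ref{prop-3.2-i} to that weighted bump rather than to $\phi_\iota$ alone. Once this is done, your dyadic summation and the final $\sum_m Z_m^2\le Z^2$ closing step go through.
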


Indeed, in virtue of the proof of Proposition~\ref{prop-3.2} (before minimizing by $\varsigma$) the  error 
\begin{equation}
|\iint \bigl(\rho_\Psi (x)-\rho (x) \bigr)\rho(y)\phi_{\iota}(x)\phi_{\kappa}(y),dxdy|
\label{eqn-4.14}
\end{equation}
on each pair of elements does not exceed  $C   \zeta_y^3 \ell_y^2$
with all powers of $\ell_y$ positive for $\ell _y\le Z^{-1/3}$ and negative for $\ell_y\ge Z^{-1/3}$. Then summation with respect to $y$-partition (recall, that $\ell_y\ge \ell_x$) results in
\begin{equation*}
C\bigl(\zeta _x^2 \ell_x^2 + \varsigma \zeta _x^3 \ell_x^3 +\varsigma^{-1} \zeta_x^{-2} Z^{5/3-\delta}\bigr)\times  
\left\{\begin{aligned}
&Z^{4/3} &&\text{for\ \ }  \ell_x\le Z^{-1/3},\\
&\zeta_x^3 \ell_x^2 &&\text{for\ \ } \ell_x\ge Z^{-1/3},
\end{aligned}\right.
\end{equation*}
with the first line corresponding to $\ell_y=Z^{-1/3}$, $\zeta_y= Z^{2/3}$ and the second line corresponding to 
$\ell_y=\ell_x$, $\zeta_y= \zeta_x$. 

Powers of $\ell_x$ are positive for $\ell_x\le Z^{-1/3}$  and negative for $\ell_x\ge Z^{-1/3}$,  and summation with respect to $x$-partition results in the value as $\ell_x=Z^{-1/3}$, $\zeta_x= Z^{2/3}$, which is 
\begin{equation*}
CZ^2 +C\varsigma Z^{7/3}+ C\varsigma^{-1}  Z^{5/3-\delta}.
\end{equation*}
Minimizing by $\varsigma = Z^{-1/3}$ we conclude that the sum of expressions (\ref{eqn-4.13}) over required pairs does not exceed $CZ^2$, which in turn implies (\ref{eqn-4.12}).
  
Consider now the case when supports of elements are not disjoint. Then we take
\begin{equation}
\chi(x,y)=\bar{\chi}(x,y) \eta (|x-y|/s)= |x-y|^{-1} \bar{\phi}_\iota (x)\phi_\kappa(y) \eta (|x-y|/s)
\label{eqn-4.15}
\end{equation}
with $\eta(t)$ smooth function, equal $0$ at $(0,\frac{1}{2})$ and $1$ at $(1,\infty)$; $s \le Z^{-1/3} $ will be selected later\footnote{\label{foot-2} Since in this case $\ell_x=\ell_y$ and $\zeta_x=\zeta_y$ we skip subscripts.}. Then while (\ref{eqn-4.10}) is preserved, (\ref{eqn-4.11}) should be replaced by 
\begin{equation}
\|\nabla_y \chi \|_{\sL^2 (\bR^3_y))}\asymp s^{-1/2}  , \qquad 
\|\nabla_y \chi\|_{\sL^\infty}\asymp s^{-2}.
\label{eqn-4.16}
\end{equation}
Then the right-hand expression (\ref{eqn-4.2}) is
\begin{equation*}
C  s^{-1/2} \zeta ^3 \ell ^3 \Bigl(   (Z^{5/6} + \zeta )  +\varepsilon^{-1/2} Z ^{1/2} 
 + \varepsilon s^{-3/2} Z   + \ell ^{-1}\Bigr)\,,
\end{equation*}
and minimizing by $\varepsilon \le Z^{-2/3}$ we get 
\begin{multline}
|\int \phi_\iota(x) \chi(x,y) \bigl( \rho^{(2)}_\Psi(x,y)-\rho_\Psi(x)\rho(y)   \bigr)\,dxdy |\\
\le
Cs^{-1/2}   \zeta ^3 \ell^3 \Bigl(  Z^{5/6} + \zeta   + s^{-1/2} Z^{2/3} + \ell ^{-1}   \Bigr).
\label{eqn-4.17}
\end{multline}
Note that summation of (\ref{eqn-4.17}) over partition returns its value as $\ell=Z^{-1/3}$, namely,
$Cs^{-1} Z^{5/3}$.

Consider for $t\colon s\le t \le \ell$ zone $\{(x,y)\colon |x-y|\asymp y\}$ and make there $t$-admissible subpartition with respect to $x$, $y$. Then contribution of each pair of subelements to 
\begin{gather*}
 |\int \phi_\iota(x) \chi(x,y) \bigl( \rho_\Psi(x)-\rho(x)\bigr) \rho(y) \,dxdy |\\
\shortintertext{does not exceed}
C \bigl(\zeta^2 t^2 + \varsigma \zeta^3t^3 + \varsigma^{-1}\zeta^{-2} Z^{5/3-\delta}\bigr) \zeta^3t^2\\
\intertext{and since there are $\asymp \ell^{3}t^{-3}$ of such pairs, we get}
C \bigl(\zeta^2t^2 + \varsigma \zeta^3t^3 + \varsigma^{-1}\zeta^{-2}Z^{5/3-\delta}\bigr)\zeta^3\ell^3 t^{-1}.
 \end{gather*} 

Then summation over $t\colon s\le t \le \ell$ returns
\begin{gather*}
C \bigl(\zeta^2\ell  + \varsigma \zeta^3\ell^2  + \varsigma^{-1}s^{-1}\zeta^{-2}Z^{5/3-\delta}\bigr)\zeta^3\ell^3 \\
\intertext{and summation over over partition returns its value as $\ell=Z^{-1/3}$, namely}
C \bigl(Z^2  + \varsigma Z^{7/3}  + \varsigma^{-1}s^{-1}Z^{4/3-\delta}\bigr).
\end{gather*}
Minimizing by $\varsigma= (sZ)^{-1/2}$ we get 
$C \bigl(Z^2  +   s^{-1/2}Z^{11/6-\delta}\bigr)$.

On the other hand, 
\begin{equation*}
\iint \phi_\iota (x)\bigl(\bar{\chi}(x,y)-\chi (x,y)\bigr) \rho (x)\rho(y)\,dxdy \asymp s^2\zeta^6  \ell ^3 \,,
\end{equation*}
and summation over $\ell \ge Z^{-1/3}$ returns its value at $Z^{-1/3}$, which is $Cs^2Z^3$, but summation over $\ell\le Z^{-1/3}$ returns $s^2Z^3\log Z$. To remedy this we replace for $\ell\le Z^{-1/3}$ constant $s$ by 
$s_x= s(\ell_x Z^{1/3})^{\delta'}$ with small $\delta'>0$. It will not affect our previous estimates.

Consider the sum of these three right-hand expressions 
\begin{equation*}
C   \bigl(Z^2+  s^{-1}Z^{5/3}+ s^{-1/2}Z^{11/6-\delta} +s^2 Z^3\bigr)
\end{equation*}
and minimize it by $s$; we get $CZ^{19/9}$ achieved as $s=Z^{-4/9}$.

Since we want $s\le \ell$ we finally set
\begin{equation}
s_x=\left\{\begin{aligned}
&Z^{-4/9} &&\text{for\ \ } \ell_x\ge Z^{-1/3},\\
&\min (Z^{-4/9}(\ell_x Z^{1/3})^{\delta'},\, \ell_x)&&\text{for\ \ }\ell_x\le Z^{-1/3}.
\end{aligned}\right.
\label{eqn-4.18}
\end{equation}

Observe that 
\begin{gather}
\iint_{\{x\colon \ell_x \ge Z^{-5/21}\} } |x-y|^{-1}\rho(x)\rho(y) \asymp Z^{41/21}
\notag\\
\intertext{and we arrive to}
\iint_{\Omega } |x-y|^{-1}\bigl(\rho^{(2)}_\Psi (x,y)-\rho(x)\rho(y)\bigr)\,dx dy\le CZ^{19/9}
\label{eqn-4.19}\\
\shortintertext{and}
\iint_{\bR^3_x\times \bR^3_y \setminus \Omega } |x-y|^{-1}\rho(x)\rho(y)\,dx dy\le CZ^{19/9}
\label{eqn-4.20}\\
\shortintertext{with}
\Omega =\{(x,y)\colon \ell_x \le Z^{-5/21}, \, |x-y|\ge s_y\}.
\label{eqn-4.21}
\end{gather}

Therefore
\begin{equation}
\int_{\Omega }  |x-y|^{-1} \rho_\Psi^{(2)} (x,y) \,dxdy \ge 
\int _{\bR^6}|x-y|^{-1} \rho(x)\rho(y)\,dxdy - CZ^{19/9}.
\label{eqn-4.22}
\end{equation}

However we know that (see, f.e. Section~\ref{monsterbook-sect-25-2} of \cite{monsterbook:5} )
\begin{gather}
E_N \ge  \Tr \bigl((H_W-\nu)^-\bigr)  -\D(\rho_\Psi, \rho) +
\frac{1}{2} \int |x-y|^{-1}\rho^{(2)} _\Psi (x,y)\,dxdy
\label{eqn-4.23}\\
\shortintertext{and}
E_N \le \Tr \bigl((H_W-\nu)^-\bigr) - \frac{1}{2} \D(\rho,\rho) + CZ^{5/3}
\label{eqn-4.24}
\end{gather}
with $\rho=\rho^\TF$, $W=W^\TF$, $\D(f,g)\colon \iint |x-y|^{-1}f(x)g(x)\,dx$. Then
\begin{gather}
\int |x-y|^{-1}\rho^{(2)} _\Psi (x,y)\,dxdy \le 2\D(\rho_\Psi-\rho ,\rho) + \D(\rho,\rho) + CZ^{5/3}
\notag\\
\intertext{and from}
|\D(\rho_\Psi -\rho,\rho)| \le \D(\rho_\Psi -\rho,\rho_\Psi -\rho)^{1/2}\D(\rho,\rho)^{1/2}\le CZ^{5/6}\times Z^{7/6}=CZ^2
\notag\\
\shortintertext{we conclude that} 
\int |x-y|^{-1}\rho^{(2)} _\Psi (x,y)\,dxdy \le \D(\rho,\rho) + CZ^{2}.
\label{eqn-4.25}
\end{gather}

Combining with (\ref{eqn-4.22}) we conclude that
\begin{equation}
\int_{\cZ}  |x-y|^{-1} \rho_\Psi^{(2)} (x,y) \,dxdy \le   CZ^{19/9}
\label{eqn-4.26}
\end{equation}
for $\cZ=\bR^3_x\times \bR^3_y \setminus \Omega$.

\chapter{Proof of Theorem~\ref{thm-1.1}}
\label{sect-5}

Now in the last two terms 
\begin{align}
&Ct^{-1}\iint _{B(\x,t)\times B(\x,t)} |x-y|^{-1}\rho^{(2)}_\Psi (x,y)\,dxdy 
\notag\\
+& C\iint _{B(\x,t)\times (\bR^3\setminus B(\x,t))} |y-\x|^{-2}\rho^{(2)}_\Psi (x,y)\,dxdy,
\notag\\
\intertext{in (\ref{eqn-2.11}) we replace $\rho^{(2)}_\Psi (x,y)$ by $\rho(x)\rho(y)$ and get}
&Ct^{-1} \iint _{B(\x,t)\times B(\x,t)} |x-y|^{-1} \rho(x) \rho(y)\,dxdy 
\label{eqn-5.1}\\
+&C \iint _{B(\x,t)\times (\bR^3\setminus B(\x,t))} |\x-y|^{-2} \rho(x) \rho(y)\,dxdy\notag\\
\intertext{and the first term does not exceed $C\zeta^6 t^4$, while the second term does not exceed}
&C\zeta^3 t^3 \int_{\bR^3\setminus B(\x,t)} |\x-y|^{-2} \rho(y)\,dy.
\label{eqn-5.2}
\end{align}
The largest error comes from the first term  when integral is taken over $B(\x,t)\times B(\x,t)\cap \cZ$ and in virtue of of (\ref{eqn-4.26}) it does not exceed $Ct^{-1}Z^{19/9}$, all other errors are lesser (to prove it we need just to repeat arguments of the previous section).

Observe that for $\ell\coloneqq \ell_\x \le Z^{-1/3}$ the largest contribution to the integral in (\ref{eqn-5.2}) comes from the layer 
$\{y\colon \ell_y\asymp \ell_\x\}$ and it is of magnitude $\zeta^3 \ell_x$. On the other hand, for $\ell_\x \ge Z^{-1/3}$  the largest contribution to the integral in (\ref{eqn-5.1}) comes from the layer 
$\{y\colon \ell_y\asymp Z^{-1/3}\}$ and it is of magnitude $Z \ell^{-2}$; the first term in (\ref{eqn-5.1}) is smaller.

Therefore we estimate two last terms in (\ref{eqn-2.11}) by
\begin{equation}
Ct^{-1} Z^{19/9} + C\left\{\begin{aligned} 
&Z^3 \ell^{-2}t^3 &&\text{for\ \ } \ell\le Z^{-1/3},\\
&Z\ell^{-8}t^3    &&\text{for\ \ } \ell\ge Z^{-1/3}.
\end{aligned}\right.
\label{eqn-5.3}
\end{equation}

Consider the second term in (\ref{eqn-2.11}):
 \begin{align}
&(2\pi)^{-1} \int \sum_m Z_m \frac{(x-\x)\cdot (x-\y_m)}{|x-\x|\cdot |x-\y_m|^3}\rho_\Psi(x)\phi (|x-\x|)\,dx.
\notag\\
\intertext{We replace in the integral in the right-hand expression  $\rho_\Psi (x)$ by $\rho(x)$ and get}
&(2\pi)^{-1} \sum_m Z_m\int  \frac{(x-\x)\cdot (x-\y_m)}{|x-\x|\cdot |x-\y_m|^3}\rho (x)\phi (|x-\x|)\,dx
\label{eqn-5.4}\\
\shortintertext{with an error}
&(2\pi)^{-1} \sum_m Z_m \int  \frac{(x-\x)\cdot (x-\y_m)}{|x-\x|\cdot |x-\y_m|^3}(\rho_\Psi(x)- \rho (x))\phi (|x-\x|)\,dx
\label{eqn-5.5}
\end{align}
and one can see easily that (\ref{eqn-5.4}) does not exceed $CZ\zeta^3\ell^{-3}t^4$\,\footnote{Indeed, it suffices to take a half-sum of the integrand  in (\ref{eqn-5.4}) with its value at symmetric about $\x$ point, because both $|x-\y_m|$ and $\rho(x)$ satisfy $|\nabla f|\le Cf\ell^{-1}$.}.

To estimate (\ref{eqn-5.5}) we make a partition in $B(\x,t)$ with subelements supported in the layers 
$\{x\colon |x-\x|\asymp t'\}$ with $p< t'\le t$ and in $B(\x,p)$ with $\zeta^{-1}\le p \le t$. According to (\ref{eqn-3.12}) the contribution of each layer does not exceed 
$CZ \ell^{-2} \bigl(\zeta^2t'^2+ \zeta^{-2}Z^{5/3-\delta}\bigr)$ and summation over layers returns
its value as $t'=t$, with $\zeta^{-2}Z^{5/3-\delta}$ acquiring logarithmic factor with we compensate by decreasing $\delta$: 
\begin{equation}
CZ \ell^{-2} \bigl(\zeta^2t^2+ \zeta^{-2}Z^{5/3-\delta}\bigr).
\label{eqn-5.6}
\end{equation}

Meanwhile, contribution of the ball $B(\x,p)$  into (\ref{eqn-5.5}) does not exceed
$CZ \ell^{-2} \|\rho_\Psi-\rho\|_{\sL^1 (B(\x,p))}$ and to estimate it we use Theorem~\ref{elden2-thm-1.1} of \cite{ivrii:el-den-2} with $a=\ell$ and $\upmu=p^3\ell^{-3}$:
\begin{multline}
\|\rho_\Psi - \rho^\TF \|_{\sL^1(B(\x,s))}  \\[4pt]
\le C\left\{\begin{aligned}
&p^2   \ell^{-2/3}Z^{11/9-\delta/3} +p^3 Z\ell ^{-2}  
&&\text{as\ \ } p  \ge   Z^{-5/18-\delta}\ell^{5/6}, &&Z^{-1}\le \ell \le Z^{-1/3},\\
&p^2\ell^{-8/3}   Z^{5/9-\delta/3}    &&\text{as\ \ } p  \ge  Z^{5/9-\delta}\ell^{10/3}, &&Z^{-1/3}\le \ell \le Z^{-5/21}.
\end{aligned}\right.
\label{eqn-5.7}
\end{multline}

Therefore (\ref{eqn-2.11}) implies
\begin{multline}
\rho_\Psi (\x) \le 
C\Bigl(Z\zeta^3 \ell^{-3}t^4 + Z^{19/9}t^{-1} + Z\zeta^2\ell^{-2} t^2\Bigr)\\
+ C\Bigl(\zeta^3 +Z^{8/3-\delta}\zeta^{-2}\ell^{-2}+ Z\ell^{-2}\|\rho_\Psi-\rho\|_{\sL^1(B(\x,p))}\Bigr) \,,
\label{eqn-5.8}
 \end{multline}
where only first line depends on $t$. One can see easily that the third term in the first line does not exceed the sum of two first terms. Further,  the second term there is larger than $Z^{19/9}\ell^{-1}$ which is larger than $CZ^3$ as $\ell\le Z^{-8/9}$ and since we already have an estimate (\ref{eqn-2.10}), we should consider only $\ell\ge Z^{-8/9}$. Furthermore, $Z^{19/9}\ell^{-1}\ge \zeta^3$.

Finally, optimizing remaining two terms in the first line of (\ref{eqn-5.8}) by $t\colon \zeta^{-1}\le t\le \ell$, we get
\begin{multline}
\rho_\Psi (\x) \le 
C\Bigl( Z^{17/9} \zeta^{3/5} \ell^{-3/5}  + Z\zeta^{-1} \ell^{-3}   + Z^{19/9}\ell^{-1}\Bigr)\\
+ C\Bigl(Z^{8/3-\delta}\zeta^{-2}\ell^{-2}+ Z\ell^{-2}\|\rho_\Psi-\rho\|_{\sL^1(B(\x,p))}\Bigr) . 
\label{eqn-5.9}
 \end{multline}
Let us compare terms there.

\begin{enumerate}[label=(\roman*), wide, labelindent=0pt]
\item\label{pf-1.1-i}
Let $Z^{-8/9}\le \ell \le Z^{-1/3}$. Then one can see easily that the first line is defined by the third term $Z^{19/9}\ell^{-1}$ for $Z^{-8/9}\le \ell\le Z^{-7/9}$  and by the first term, which is $Z^{197/90}\ell^{-9/10}$, for $Z^{-7/9}\le \ell\le Z^{-1/3}$.

One can see easily that the first term in the second line of (\ref{eqn-5.9}) is smaller than $Z\ell^{-1}$. Using the first case in (\ref{eqn-5.7}) with $p=\max(\zeta^{-1}, \ell^{5/6}Z^{-5/18})$ and $\delta=0$, we see that  the second  term in the second line is smaller than the first line as well. Thus we arrive to Theorem~\ref{thm-1.1}, Statement~\ref{thm-1.1-i}.

\item\label{pf-1.1-ii}
Let $ \ell \ge Z^{-1/3}$. Then one can see easily that the first line of (\ref{eqn-5.9}) is defined by 
the first term, which is $Z^{17/9}\ell^{-9/5}$ for $Z^{-1/3}\le \ell\le Z^{-5/18}$ and by
$Z\ell^{-1}$ for $ \ell\ge Z^{-5/18}$.

Consider the second line and impose condition $\ell \le Z^{-2/9}$.  Then the first line dominates the first term here.  Using the second case in (\ref{eqn-5.7}) with $p=\max(\zeta^{-1}, \ell^{10/3}Z^{5/9})$ and $\delta=0$, we see that the first line dominates the last term in the second line as well.

Further, let $\ell \ge Z^{-2/9}$. Recall that the second line (except $C\zeta^3$) was a result of the estimate of the second term in (\ref{eqn-2.11}), which, however, could be estimated by 
\begin{equation}
CZ\ell^{-2}\int_{B(\x,\ell(\x))} \rho_\Psi (x)\,dx.
\label{eqn-5.10}
\end{equation}
It is well known that $\int \rho_\Psi \le CZ$ and therefore (\ref{eqn-5.10}) does not exceed $CZ^2\ell^{-2}$ which covers $\ell \ge Z^{-1/9}$.

Furthermore, in the remaining range $Z^{-2/9}\le \ell \le Z^{-1/9}$  we can use Proposition~\ref{prop-3.2}\ref{prop-3.2-ii}  to show, that the first term in the second line does not exceed $Z^{19/9}\ell^{-1}$ while the second term there is estimated again by 
the second case in (\ref{eqn-5.7}). Thus we arrive to Theorem~\ref{thm-1.1}, Statement~\ref{thm-1.1-ii}.

\item\label{pf-1.1-iii}
Finally, using Proposition~\ref{prop-3.2}\ref{prop-3.2-iii} we  prove Theorem~\ref{thm-1.1},  Statement~\ref{thm-1.1-iii}.
\end{enumerate}


\begin{thebibliography}{10}

\bibitem[Ba]{Bach}
V. Bach. \emph{Error bound for the Hartree-Fock energy of atoms and molecules\/}.
Commun. Math. Phys. 147:527--548 (1992).


\bibitem[FC]{FC}
D. M. Fradkin,  F. Calogero
\emph{Integral expressions for arbitrary phase shift differences.\/}
Nuclear Physics 75:475--480  (1966).

\bibitem[GS]{GS}
G. M. Graf,  J. P Solovej. \emph{A correlation estimate with applications to quantum systems with Coulomb interactions\/}
Rev.~Math.~Phys., 6(5a):977--997 (1994).
Reprinted in The state of matter a volume dedicated to
E.~H.~Lieb, Advanced series in mathematical physics, 20,
M.~Aizenman and H.~Araki (Eds.), 142--166, World Scientific (1994).

\bibitem[HeL]{HeL}
O. J. Heilmann,  E. H. Lieb. \emph{The electron density near the nucleus of a
large atom\/}. Phys. Rev. A, 52(5):3628-3643 (1995).

\bibitem[HHT]{HHT}
M. Hoffmann-Ostenhof, T.~Hoffmann-Ostenhof, W.~Thirring.
\emph{Simple bounds to the atomic one-electron density at the nucleus and to expectation values of one electron operators\/}. 
J. Phys. B: At. Mol. Phys. 11(19):L571--L575 (1978). 

\bibitem[ILS]{IaLS}
A.~Iantchenko, E. H. Lieb, H.~Siedentop.
\emph{Proof of the strong Scott conjecture for atomic and molecular cores related to Scott's correction\/}.
J. Reine Angew. Math. 472:177–195 (1996).

\bibitem[Ivr1]{monsterbook:5}
V. Ivrii. \emph{Microlocal Analysis, Sharp Spectral  Asymptotics and Applications.Volume V. Applications to Quantum Theory and Miscellaneous Problems\/}. Springer-Verlag, 2019;\\
see also \url{http://www.math.toronto.edu/ivrii/monsterbook.pdf}

\bibitem[Ivr2]{ivrii:strong-scott}
V. Ivrii. \emph{Strong Scott conjecture\/}.
\href{https://arxiv.org/abs/1908.05478}{arXiv:1908.05478}

\bibitem[Ivr3]{ivrii:el-den-2}
V. Ivrii. \emph{Thomas-Fermi approximation to electronic density\/}.
\href{https://arxiv.org/abs/1911.03510}{arXiv:1911.03510}




\bibitem[LT]{LT}
E. H. Lieb, W. E. Thirring. \emph{Inequalities for the moments of the eigenvalues of the Schr\"odinger
Hamiltonian and their relation to Sobolev inequalities}, in Studies in Mathematical Physics (E. H. Lieb,
B. Simon, and A. S. Wightman, eds.), Princeton Univ. Press, Princeton, New Jersey, 1976, pp. 269--303.


\bibitem[RS]{ruskai:solovej}
M. B. Ruskai,  J. P. Solovej.\emph{Asymptotic neutrality of
polyatomic molecules\/}. In \emph{Schr\"odinger Operators},  Springer Lecture
Notes in Physics 403, E. Balslev (Ed.), 153--174, Springer Verlag
(1992).


\bibitem[S1]{S1}
H.~Siedentop.
\emph{An upper bound for the atomic ground state density at the nucleus.\/}
Letters in Mathematical Physics 32:221-229 (1994).

\bibitem[S2]{S2}
H.~Siedentop. 
\emph{A proof of the strong Scott conjecture.\/}
In Quasiclassical Methods, The IMA Volumes in Mathematics and its Applications, Springer-Verlag, 1996.
95:150--159.

\end{thebibliography}
\end{document}